\newtheorem{theorem}{Theorem}[section]
\newtheorem{lemma}[theorem]{Lemma}
\begin{document}

\CopyrightYear{2015}

\title{Optimal Time-dependent Sequenced Route Queries in Road Networks}

\numberofauthors{6} 

\author{
\alignauthor
Camila F. Costa\\
\affaddr{Federal University of Cear\'a, Brazil}\\
\email{camilaferc@lia.ufc.br}
\alignauthor
Mario A. Nascimento\\
\affaddr{University of Alberta, Canada}\\
\email{nascimento@ualberta.ca}
\alignauthor
Jos\'e A. F. Mac\^edo\\
\affaddr{Federal University of Cear\'a, Brazil}\\
\email{jose.macedo@lia.ufc.br}
\and
\alignauthor
Yannis Theodoridis\\
\affaddr{University of Piraeus, Greece}\\
\email{ytheod@unipi.gr}
\alignauthor
Nikos Pelekis\\
\affaddr{University of Piraeus, Greece}\\
\email{npelekis@unipi.gr}
\alignauthor
Javam Machado\\
\affaddr{Federal University of Cear\'a, Brazil}\\
\email{javam@lia.ufc.br}
}


\maketitle%

\begin{abstract}
In this paper we present an algorithm for optimal processing of time-dependent sequenced route queries in road networks, i.e., given a road network where the travel time over an edge is time-dependent and a given ordered list of categories of interest, we find the fastest route between an origin and destination that passes through a sequence of points of interest belonging to each of the specified categories of interest.  For instance, considering a city road network at a given departure time, one can find the fastest route between one's work and his/her home, passing through a bank, a supermarket and a restaurant, in this order. The main contribution of our work is the consideration of the time-dependency of the network, a realistic characteristic of urban road networks, which has not been considered previously when addressing the optimal sequenced route query. Our approach uses the A* search paradigm that is equipped with an admissible heuristic function, thus guaranteed to yield the optimal solution, along with a pruning scheme for further reducing the search space.  In order to compare our proposal we extended a previously proposed solution aimed at non-time dependent sequenced route queries, enabling it to deal with the time-dependency.  Our experiments using real and synthetic data sets have shown our proposed solution to be up to two orders of magnitude faster than the temporally extended previous solution.
\end{abstract}

\section{Introduction}
\label{sec:introduction}
The optimal sequenced route (OSR) query was originally introduced in \cite{Mehdi}.  It aims at finding the optimal route from an origin location passing through a number of points of interest (POIs), each belonging to a specific sequence of categories of interest (COIs). 
This query has several applications within location-based services or car navigation systems, e.g., planning a trip where one leaves from work towards a bank to withdraw money and then to a shopping mall.  Note that ``banks'' and ``shopping malls'' form two COIs, each with many possible POIs. More importantly, although there may be many banks and shopping malls in the city, an OSR query chooses the one bank and the one shopping mall that, in this order, minimize the total cost, such as the travel time, of the trip.
Some variations and solutions to the OSR query have been proposed in \cite{Eisner,Htoo} and \cite{Yutaka} (c.f. Section~\ref{sec:related}).  

However, the ``standard'' OSR query does not take into consideration a single destination that needs to be reached, e.g., after satisfying the COI traversal criterion one may need to go home.  More importantly, on a typical road network the time a user requires to traverse an edge depends on the departure time, i.e., the fact that the road network is a  time-dependent graph.  We, on the other hand, take this characteristic as an intrinsic part of the problem and propose, as our main contribution, an efficient algorithm to solve the new \emph{optimal time-dependent sequence route} (OTDSR) query, defined as follows:
%
\begin{quote}
\emph{The OTDSR($TDG, s, d, t, Q_c$) query takes as input a {\em time-dependent} graph $TDG$  and returns a route for a user departing at time $t$ from $s$ towards $d$, visiting exactly one POI from each COI in the ordered list $Q_c$, such that this route is optimal in terms of total travel time.}
\end{quote}
  
Since time is essential in this case, we also consider the time the user will spend at each POI (such assumption is not considered nor necessary in the standard OSR query).  The case where that such time is itself time-dependent (e.g., being served at a restaurant may take longer at noon than at 11am or 1pm regardless of the user) can be trivially addressed by simply substituting a POI node with two nodes, an ``entry'' and an ``exit'' node, modelling the time spent at the POI as the time cost as the time to travel between such nodes.  That is, the problem remains essentially the same and our proposed solution is readily applicable. The only characteristic that changes is the size of the network, namely the number of POI nodes would double.  
Note that this definition does not prevent one to visit a given COI more than once, nor having the origin and destination as the same points in the TDG.  That is, a path such as: leaving home, going to a restaurant, a movie theatre, a shopping mall, another restaurant and then finally back home is a perfectly legitimate route.

We also make the practically reasonable assumption that the (time-dependent) travel cost of each edge in the TDG satisfies the FIFO property, i.e., an object that starts traversing an edge first has to finish traversing this edge first as well. The general time-dependent shortest path problem in TDGs is NP-hard \cite{OrdaRom1990}, but it has a polynomial time solution in FIFO networks. In the context of our problem, the FIFO property guarantees that there is no improvement in travel time if one ``waits'' at a vertex for the ``best time'' to traverse it, thus in what follows such ``waiting'' is not allowed.

Finally, our proposed solution is generic in the sense that it can be also readily applicable to the standard (i.e., non-time dependent) OSR query as well.   Any OSR query instance can be trivially converted to an OTDSR query instance where the edge costs are constant and the time spent at each POI is null.


In order to exemplify the definitions presented above, consider the graph shown in Figure~\ref{fig:network}, which is a representation of a simple TDG. The curves shown in Figure~\ref{fig:graphics} represent the travel time costs of each edge of the network. Let us consider an OTDSR($TDG, s, d, t, Q_c$) query where $t = 18$ and $Q_c = [C_B, C_R]$.  For the sake of simplicity, but without loss of generality, we assume that the time spent at each COI depends only on the COI itself, i.e., it is equal for all POIs in the same category, e.g., the time spend at any bank in $C_B$ and any restaurant in $C_R$ is 15 min and 60 min respectively.


\begin{figure}[htb]
\centering
\includegraphics[width=0.2\textwidth]{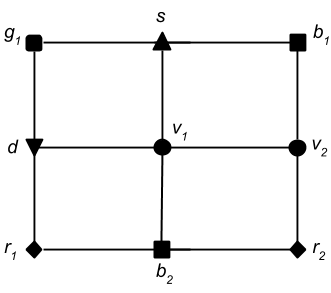}
\caption{A small TDG instance with three different COIs: banks $C_B$ = \{$b_1$ and $b_2$\}, restaurants  $C_R$ = \{$r_1$ and $r_2$\} and a gas station $C_G$ = \{$g_1$\}, a starting point point $s$ and a destination node $d$. Note that $v_1$ and $v_2$ are not POIs, but rather just road intersections..}
\label{fig:network}
\end{figure}

%
%

\begin{figure}[ht]
      \centering
      \includegraphics[width=.2\textwidth]{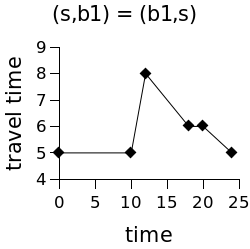}
      \includegraphics[width=.2\textwidth]{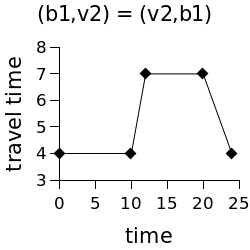}
      \includegraphics[width=.2\textwidth]{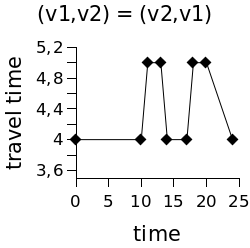}
      \includegraphics[width=.2\textwidth]{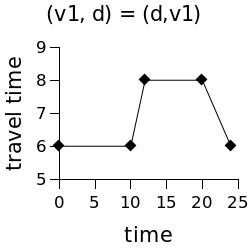}
      \includegraphics[width=.2\textwidth]{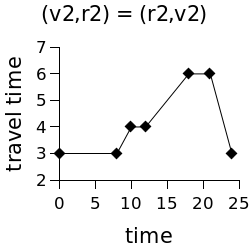}
      \caption{Sample cost functions (travel time in min $\times$ time of day) for the network in Figure~\ref{fig:network}.}
      \label{fig:graphics}
\end{figure}



A greedy approach to solve the standard OSR problem is to perform independent NN searches locally. 
The OSR query cannot be optimally solved by using such an approach \cite{Mehdi,Mehdi2008} and this is also true for the OTDSR query.  
Indeed, if we were to use such a greedy approach for the query instance above we would need to perform a 1-NN query for time-dependent networks, following the approach in \cite{cruz}, twice. First we would find the travel time-wise closest node $v \in C_B$ from $s$ at  $t = 18$, which would be $b_1$. The cost of traversing the edge $\langle s ,b_1 \rangle$ at 18:00 is denoted as $c_{(s, b_1)}($18:00$)$ and is equal to 6 min. Thus, the arrival time at $b_1$ would be 18:06. After staying at the POI for 15 min, the user would leave the node (i.e. bank) at 18:21. The next step would be to locate the closest node $u \in C_R$ from $b_1$ at that time. The node $r_2$ is found through the path $\langle b_1, v_2, r_2 \rangle$ with cost 13 min (via $\langle b1, v2 ,r2 \rangle$, costing 6+7 min) and the arrival time in $r_2$ would be 18:34.  After spending 60 min at the restaurant, the user would then leave $r_2$ at $t = $19:34. Finally the destination $d$ would be reached with cost equal to 19 minutes through $\langle r_2, v_2, v_1, d \rangle$ arriving at 19:53 . Thus, the total travel time of the route $\langle s, b_1, v_2, r_2, v_2, v_1, d \rangle$ is 38 minutes.  However, the optimal one, i.e., the fastest route, is actually $\langle s, v_1, b_2, r_1, d \rangle$ with travel cost equal to 21 minutes. Note that as those are the total travel time, it does not include the time spent at the POIs, since this time is set by the user and will be the same in any solution.  Nonetheless it must be accounted for during the query processing as it will influence the departure time (hence travel cost) from any POI.


The main contribution of this paper is a method to find optimal solutions for OTDSR queries.  The method is based on the A* search algorithm \cite{goldberg} that is equipped with a suitable and provably admissible heuristic function, which guarantees the optimality of the search result. For comparison purposes, we also present a solution that is obtained by extending the also optimal PNE (Progressive Neighbor Exploration) approach proposed in \cite{Mehdi} to cope with the time-dependency characteristic of our problem of interest.  Our extensive experiments using synthetic and real networks show that our algorithm is more efficient than the one using the extended PNE approach.

The remainder of this paper is structured as follows. In Section~\ref{sec:related} we present a brief discussion of related works.
In Section~\ref{sec:proposed_solutions}, we present our proposed approach, and also how we extended the PNE approach. The experimental evaluation and results are shown in Section~\ref{sec:experiments}. Finally, Section~\ref{sec:conclusion} presents a summary of our findings and some suggestions for further work.

\section{Related Work}
\label{sec:related}

The Trip Planning Query (TPQ) was originally proposed in \cite{Feifei}. Differently from the OSR query as well as from the OTDSR query, no order among COIs is imposed in TPQs, which leads to  its NP-Hardness and thus the use of approximation algorithms. 

Optimal solutions to the OSR query in vector and metric spaces were first proposed in \cite{Mehdi} 
and later extended in \cite{Mehdi2008}. 
For vector spaces, \cite{Mehdi} presented a light threshold-based iterative algorithm named LORD, that utilizes various thresholds to filter out the locations that cannot be in the optimal route. The authors also proposed R-LORD, an extension of LORD which uses R-trees to examine the threshold values more efficiently.
For metric spaces and arbritrary size of the COI sequence, they proposed the PNE approach that progressively applies nearest-neighbor queries on different point types to construct the optimal route for the OSR query\footnote{
Since we compare our proposal to a time-dependent extension of the PNE approach, we discuss it in more details in Section~\ref{baseline}.
}.
The main goal of PNE is to replace expensive distance computations of LORD.
In \cite{Mehdi2008} the authors proposed another approach which is applicable for both vector and metric space. They exploited the geometric properties of the solution space and theoretically proved its relation to additively weighted Voronoi diagrams which are recursively accessed to incrementally build the OSR.
That approach is not suitable for comparison to the one we propose because they compute the solution ``backwards,'' i.e., from the last POI to the first one in the desired sequence.  This is not feasible in a time-dependent setting since one would have to know the arrival time at the last POI beforehand, which is itself a function of the departure time that is specified at query time. Thus, we chose to compare our approach to the earlier PNE proposal.

In \cite{Yutaka} the authors proposed a solution that is based on the Incremental Euclidean Restriction (IER) \cite{papadias}. It first finds the shortest OSR given by searches in the Euclidean space, verifies its length in the road network and this value is used as an upper bound. All OSRs that have a length smaller than this value also have the potential to be the shortest route in the road network. Therefore, they must be searched in the
Euclidean space, and then the results must be verified in the road network. The shortest of them is returned as the result. 

In \cite{Htoo} two algorithms that perform an unidirectional and a bidirectional search, respectively, were proposed. Both of them are controlled by an A* algorithm and use the Euclidean distance between a vertex and the destination as the guiding heuristic function. The authors also proposed using a structure called ``visited POI graph'', in order to reduce multiple node expansions.

Two other techniques were presented in \cite{Eisner}. The first, named Iterative Doubling, is an improvement of the EDJ solution \cite{Mehdi}, a Dijkstra-based approach. The OSR is found by performing $l$ Dijkstra runs on the graph, where $l$ is the number of COIs that must be found. Making use of the fact that sequenced route queries tend to be mostly local, this solution is improved by avoiding exploring facilities that are too far. The second solution is based on an extension of the contraction hierarchy \cite{Geisberger}, that was originally proposed as a pre-processing step for ordinary shortest path queries. 

Another query related to the one addressed in this work, named multi-rule partial sequenced route (MRPSR) query, was investigated in \cite{Haiquan}. This query generalizes the OSR and the TPQ queries, by translating ordering constraints into rules. 
When the set of partial sequence rules is empty this query is equivalent to the TPQ query and when the set of partial sequence rules contains one partial sequence rule specifying the  order given by the user, this problem is identical to the OSR problem.

Several other works have studied the problem of finding the fastest route between two locations in time-dependent networks \cite{Xu,Ding,Kim,Kanoulas}. However, these works do not deal with the need to visit specific COIs when traveling between locations.

In \cite{Berube} the authors proposed a travel planning problem which consists of finding the best travel plan from an origin to a destination following a predetermined sequence of POIs in a transportation network with deterministic time-dependent travel times. Other works have considered the problem of finding the best trip plan which visit several predetermined POIs (no order is specified) \cite{Xiang}. Differently from those works, our OTDSR query chooses one (of possibly several) POI from each COI according to a specified COI sequence in order to minimize the route's total travel time.

In \cite{TARS} the authors proposed the traffic-aware route search (TARS) query which aims at finding the fastest route from an origin to a destination via POIs of specified types, while taking into account the time-dependency of the network and the possibility that some visited entities will not satisfy the user. A TARS query may include temporal and order constraints that restrict the order by which entities are visited. Since it is a NP-hard problem, as it is a generalization of the Traveling-Salesman Problem, three heuristics to answer TARS queries were presented. Differently from this approach, we assume that a total order over the COIs is given and, thus, we are able to find an optimal solution.

\section{Solutions to the OTDSR query}
\label{sec:proposed_solutions}

We propose two solutions to solve the OTDSR problem optimally. We first present a baseline solution which is based on the progressive neighbor exploration (PNE) approach in \cite{Mehdi}.  PNE is non-time-dependent thus we extended it in order to cope with the time dependency characteristic of the OTDSR query.  The resulting approach, which we name TD-PNE, uses the TD-NE-A* \cite{cruz} algorithm to perform the necessary time-dependent NN local searches. Next, we propose our solution, called TD-OSR, which uses an A* search to guide the network expansion; we also propose a scheme to reduce the number of node re-expansions.

\subsection{Pre-processing step}
\label{categories}

Aiming at reducing the execution time of the query, a pre-processing step is performed in both solutions, TD-PNE and ours, in order to calculate bounds to guide the expansion of vertices.

Particularly, the TD-PNE solution uses the pre-computed bounds in the TD-NE-A* algorithm \cite{cruz}, which is used to find local time-dependent nearest neighbors. The original TD-NE-A* algorithm aims at finding the $k$ closest POIS from a starting point $s$ at a given departure time $t$ by performing an A* search. 
That algorithm pre-computes lower bounds to reach the closest POI, belonging to any COI, from every vertex in the network. These bounds are used as the heuristic function in the A* search. As we are interested only in POIs belonging to the sequence of COIs given as input, we modified that algorithm so that, in the pre-processing step, we calculate optimistic estimates to reach the closest POI in each COI, from each $v \in V$. Thereby, the search can be guided towards a specific COI. 

In order to calculate these lower bounds, independently of a departure time, we construct a lower bound graph $\underline{G}$. Given a TDG $G$, $\underline{G}$ is a graph that has the same set of vertices $V$ and edges $E$ as $G$ but the cost of an edge $\underline{G}$ is given by the minimum cost possible to traverse the same edge in $G$; $\underline{G}$ is thus an optimistic non-time dependent version of $G$. For example, considering the TDG shown in Figure \ref{fig:network}, the cost of the edge $(s, b_1)$ in $\underline{G}$ would be equal to 5 minutes.

Once $\underline{G}$ is constructed, we pre-compute, from every $v \in V$ and for every $C_i \in C$, the cost $L(v, C_i)$ that is the cost of the fastest path from $v$ to its nearest POI belonging to the category $C_i$ in $\underline{G}$. Note that we do not calculate the cost to reach every POI of the network from every $v \in V$. Moreover, these costs are calculated for every COI because the COI order given as the query input may include any COI and they are not known in advance.


In order to compute these cost estimates, we execute Dijkstra's algorithm from each vertex $v$ to find its nearest POI from each category in $\underline{G}$. In the worst case, this algorithm runs in time $O(\left\vert{E}\right\vert \log \left\vert{V}\right\vert)$ for each execution. As we start a search from each vertex, the total complexity in the worst case is $O(\left\vert{V}\right\vert\left\vert{E}\right\vert\log \left\vert{V}\right\vert)$.  It is noteworthy to emphasize that this is an one time cost which will be amortized over time. Also, the calculated costs do not tend to be updated frequently since they are generally given by the time to traverse the roads when it is possible to drive at the maximum allowed speed.
Regarding space complexity, the space required to store all the information calculated in the pre-processing step is equal to $\left\vert{C}\right\vert \left\vert{V}\right\vert$ bytes, where $\left\vert{C}\right\vert$ is the number of categories of POIs. 

\subsection{PNE's Time-dependent Extension}
\label{baseline}

In order to have a baseline solution, we propose a time-dependent extension of the PNE \cite{Mehdi} algorithm, named TD-PNE. The PNE algorithm generates the optimal route progressively by performing local NN searches in road networks with static costs. 

For the sake of completeness, Figure \ref{fig:pne} illustrates how the original PNE algorithm works. Let us suppose that one departs from the vertex $s$ and wants to visit a bank and a restaurant, in this order. The PNE algorithm first looks for the nearest bank from $s$, $b_1$, and stores this bank and the cost to reach it in the heap (also shown in Figure \ref{fig:pne}). Next, it expands the entry with minimum cost in the heap, ($b_1$: 4), and then it looks for the second nearest bank from $s$ and for the nearest restaurant from $b_1$. It finds $b_2$ with cost 5 and $r_2$ with cost 9. The entries ($b_2$: 5) and ($b_1, r_2$: 13) are inserted in the heap. Note that a complete route $\langle b_1,r_2 \rangle$ was found. The cost of this route is then used as an upper bound, i.e, routes with cost greater than 13 are not investigated. The expansion continues in the same way and the route $\langle s,b_3,r_2 \rangle$ with cost 9 is returned as the optimal one.

\begin{figure}[htb]
   
     \begin{minipage}{.45\textwidth}
     \centering
          \includegraphics[width=0.6\textwidth]{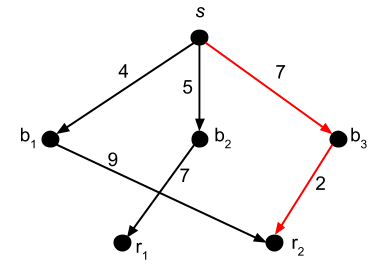}
     \end{minipage}
     \begin{minipage}{.45\textwidth}
         \centering
      \begin{tabular}{| c | l |}
                \hline
                {\bfseries step} & {\bfseries heap contents} \\ \hline
                1 & ($b_1$: 4) \\ \hline
                2 & ($b_2$: 5), ($b_1, r_2$: 13) \\ \hline
                3 & ($b_3$: 7), ($b_2, r_1$: 12) \\ \hline
                4 & ($b_3$, $r_2$: 9) \\ \hline
     \end{tabular}
     \end{minipage}	
   
  \caption{Entries stored in the heap and the POIs found during the execution of PNE algorithm.}
  \label{fig:pne}
\end{figure}

We modified that algorithm such that each NN search accounts for the dependency on departure time and is performed using the TD-NE-A* algorithm \cite{cruz}. The latter is equipped with the new heuristic described above, that takes into account the next COI to be visited in the sequence. For instance, the time-dependent NN query that starts from $s$ is guided towards its closest banks and takes into account the departure time from $s$.

\subsection{The TD-OSR Algorithm}
\label{solution}

We now present TD-OSR, our main contribution in this paper.  It uses an A* search to guide the network expansion so that vertices with more potential are examined first. To measure the potential of a vertex $v$, a function $f(v) = g(v) + h(v)$ is used, where $g(v)$ is the current cost from the starting point $s$ to $v$ , and $h(v)$ is a heuristic function, which in our case is a lower bound estimate of the time to pass through all the categories of POIs and to reach the destination. The smaller the value given by the sum of the current cost to a vertex plus the heuristic function value for it, the greater its potential. If the estimate $h(v)$ is a lower bound to the actual cost ---which is the case as we shall show shortly--- $f(v)$ is said to be admissible and therefore, by virtue of the A* design the solution, i.e., the path found in the TDG, is guaranteed to be optimal.


In order to calculate the $h(.)$ value, we need cost estimates to reach the COIs in the graph as well as to reach the destination $d$. The first is computed in the pre-processing step (described above) and is independent of the sequence given as input. The calculation of the estimate to reach the destination is more difficult because it can potentially be any node of the network. To pre-calculate the shortest path from every node to every other node that could be set as $d$ is an impractical option.  However, once $d$ is given as a query input, we  can compute the shortest path from every vertex $v \in V$ to $d$ by running Dijkstra's algorithm (one-to-all shortest paths) from $d$ in the reverse graph of $\underline{G}$ only once. A reverse graph of $G$ is a graph with the same set of vertices as $G$, but the edges are reversed, i.e., if $G$ contains an edge $(u,v)$ then the reverse of $G$ contains an edge $(v,u)$.  


Once we have the estimates to reach the categories of POIs and to reach the destination computed for any vertex $v$, we can calculate the heuristic function $h(v)$ as follows:

Given the sequence of COIs $C_q = (C_q^1, C_q^2, ..., C_q^m)$ and considering that a path from the starting point $s$ to a vertex $v$ has already passed by POIs of the categories $C_q^1, ..., C_q^i$, the heuristic function value for $v$ is given by: 
\begin{equation}
\label{eq:heuristic}
h(v, C_q[i+1...m], d) = max \{ L(v, C_q^{i+1}),...,L(v, C_q^{m}), L(v, d) \}
\end{equation}
\noindent 
where $L(v, C_q^{i})$ is the cost of the shortest path from $v$ to its nearest POI of the category $C_q^i$ in $\underline{G}$ and $L(v, d)$ is the estimate to reach $d$ from $v$. 
In other words, the cost to pass through the categories of POIs in $C_q = (C_q^1, C_q^2, ..., C_q^m)$ and to reach the destination is at least the minimum cost to reach the farthest of them. 

Let us define two concepts and their notations (which will also be used in the algorithm shortly).
We define as the {\em Travel Time} (TT) of a mobile user the accumulated
cost of traversing all edges in its current path until reaching the
current node $v$, and we denote it  by $TT_v$.  Similarly, we denote
the {\em Arrival Time} (AT) of the mobile user at a node $v$ as $AT_v$.
Moreover, we denote $AT_q = t$ (the query's departure time) and when
traversing an edge between node $u$ and $v$, we have $AT_v = AT_u +
c_{uv}(AT_u)$. 
If we denote $TT(v, d, AT_v, C_q[i+1...m])$ as the remaining travel time leaving from $v$ at time $AT_v$ towards $d$ through POIs in the sequence $C_q[i+1...m]$, it suffices to prove that $h(v, C_q[i+1...m], d) \leq TT(v, d, AT_v, C_q[i+1...m])$, that is: 


\begin{lemma}
\label{lb}
$h(v, C_q[i+1...m], d)$, as defined in Equation~\ref{eq:heuristic}, is a lower bound to the actual optimal remaining cost 
to travel from $v$ to $d$.
\end{lemma}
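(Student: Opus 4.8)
The plan is to take an actual optimal remaining route and show that its total travel time dominates each individual term appearing in the maximum that defines $h$. Let $R = \langle v = w_0, w_1, \ldots, w_k = d\rangle$ be an optimal feasible route that leaves $v$ at time $AT_v$, visits one POI of each category $C_q^{i+1}, \ldots, C_q^m$ in the prescribed order, and terminates at $d$; its total time-dependent cost is exactly $TT(v, d, AT_v, C_q[i+1\ldots m])$. Since $h$ is a \emph{maximum} over the targets $C_q^{i+1}, \ldots, C_q^m$ and $d$, it suffices to prove $TT(v, d, AT_v, C_q[i+1\ldots m]) \ge L(v, T)$ separately for each such target $T$, and then take the maximum over $T$.

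The first ingredient I would establish is that the lower-bound graph $\underline{G}$ underestimates every time-dependent traversal. Writing $\underline{c}_{uv}$ for the constant cost of edge $(u,v)$ in $\underline{G}$, its construction gives $\underline{c}_{uv} = \min_\tau c_{uv}(\tau) \le c_{uv}(AT_u)$ for every possible departure time $AT_u$. Consequently, summing over the edges of any prefix $\langle w_0, \ldots, w_j\rangle$ of $R$, the accumulated travel time $\sum_{l<j} c_{w_l w_{l+1}}(AT_{w_l})$ is at least $\sum_{l<j}\underline{c}_{w_l w_{l+1}}$, i.e.\ the static length of that prefix measured in $\underline{G}$. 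Thus the $\underline{G}$-length of any sub-path of $R$ lower-bounds the corresponding accumulated travel time along $R$.

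Next I would fix a target and combine this with two elementary facts. If $T = C_q^j$ for some $j \in \{i+1, \ldots, m\}$, then $R$ passes through some specific POI $p \in C_q^j$; the prefix of $R$ from $v$ to $p$ has $\underline{G}$-length at least $\mathrm{dist}_{\underline{G}}(v, p) \ge \min_{p' \in C_q^j}\mathrm{dist}_{\underline{G}}(v, p') = L(v, C_q^j)$, and, because all edge costs are nonnegative, the full travel time of $R$ is at least the cost of this prefix. The same argument with $p$ replaced by the endpoint $d$ yields $TT(v, d, AT_v, C_q[i+1\ldots m]) \ge L(v, d)$. Taking the maximum over all targets then gives $TT(v, d, AT_v, C_q[i+1\ldots m]) \ge \max\{L(v, C_q^{i+1}), \ldots, L(v, C_q^m), L(v, d)\} = h(v, C_q[i+1\ldots m], d)$, as claimed.

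I expect the only delicate point to be the edge-wise domination step: one must verify that time-dependency does not break the inequality, i.e.\ that although the departure time from each intermediate node is itself determined by the accumulated travel, every individual edge still costs at least its minimum $\underline{c}_{uv}$, so the prefix sums telescope cleanly into a static $\underline{G}$-length. Everything else reduces to the observation that $h$ is a maximum rather than a sum, which is precisely what lets the argument reason about reaching a single farthest target without having to bound the joint cost of visiting all remaining categories along one path.
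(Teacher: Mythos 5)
Your proof is correct and in substance identical to the paper's: the paper argues by contradiction, case-splitting on whether the maximum in Equation~\ref{eq:heuristic} is attained by some $L(v, C_q^{j})$ or by $L(v,d)$, which is exactly the contrapositive of your direct target-by-target domination argument. You additionally make explicit the step the paper takes for granted --- that $\underline{c}_{uv} = \min_\tau c_{uv}(\tau) \leq c_{uv}(AT_u)$ for every departure time (including times shifted by POI visits), so static $\underline{G}$-lengths of prefixes underestimate the accumulated time-dependent cost --- which is a welcome tightening rather than a different approach.
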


\begin{proof}
Let us assume  that we have:
\begin{center}
$h(v, C_q[i+1...m], d) > TT(v, d, AT_v, C_q[i+1...m])$ 
\end{center}
or equivalently that:
\begin{center}
$max \{ L(v, C_q^{i+1}),..., L(v, C_q^{m}), L(v, d) \} > TT(v, d, AT_v, C_q[i+1...m])$
\end{center}
We consider two cases: the maximum value is given by $L(v, C_q^{j})$ for $i+1 \leq j \leq m$ or by $L(v, d)$. In the first case we have $L(v, C_q^{j}) > TT(v, d, AT_v, C_q[i+1...m])$, which is not possible because the path from $v$ to $d$ has to include POIs of the categories $C_q[i+1...m]$ in the specified order including a POI of the category $C_q^{j}$ and thus, the cost of this path can not be greater than the lower bound to the cost of reaching a POI belonging to $C_q^{j}$ from $v$. In the second case we have $L(v, d) > TT(v, d, AT_v, C_q[i+1...m])$ which is also not possible because $L(v, d)$ is a lower bound to the cost of reaching $d$ from $v$.  Therefore, the assumption must be incorrect and the proof is complete.
\qed
\end{proof}

Thereby, the heuristic function that guides the A* search in our approach is admissible and therefore the obtained path is guaranteed to be optimal.



Algorithm \ref{alg:TD-OSR} shows the pseudo-code for the TD-OSR algorithm. 
First, the algorithm loads the reverse graph of $\underline{G}$, denoted as $\underline{G}^R$. Then, Dijkstra's algorithm is used to calculate the costs of the fastest paths from $d$ to every $v \in V$ in $\underline{G}^R$. These costs, which are estimates to reach the destination, are stored in vector $destination$. This initialization step is shown in the lines 1 and 2.

\begin{algorithm}[h!]
\caption{The TD-OSR algorithm}
\label{alg:TD-OSR}
 \KwData{A starting point $s \in V$, the departure time $t \in [0, T]$, a sequence $C = [C_q^1,..., C_q^m]$ and a destination $d \in V$}
 \KwResult{The fastest route from $s$ to $d$ considering the departure time $t$ and selecting one POI from each category in $C$ according to the given sequence}
 $\underline{G}^R \gets loadReverse\underline{G}$\;
 $destination[] \gets Dijkstra(\underline{G}^R, d)$\;
 $TT_s \gets 0$\;
 $AT_s \gets t$ \;
 $LB_s \gets h(s, C[1...m], d)$\;
 $R_s \gets []$\;
 Enqueue $(s, AT_s, TT_s, LB_s, R_s)$ in $Q$\;
 \While{$Q \neq \emptyset $}{
 	$(u, AT_u, TT_u, LB_u, R_u)\;  \gets$ Dequeue $Q$\;
 	add $u$ to $Removed[\left\vert{R_u}\right\vert]$\; 
 	\If{$u = d$ and $\left\vert{R_u}\right\vert = \left\vert{C}\right\vert$ }{
 		Return $R_u$\;
 	}
	
 	\For{$v \in adjacency(u)$}{
 		$next \gets \left\vert{R_u}\right\vert + 1$\;
 		$R_v \gets R_u$\;
 		$spent \gets 0$\;
 		\If{$v \in C[next]$}{
 			$spent \gets \tau_{C[next]}$\;
 			$R_v[next] \gets v$\;
 			$next \gets next + 1$\;
 		}
 		$TT_v \gets TT_u + c_{(u, v)}(AT_u)$\;
 		$AT_v \gets (t + TT_v + spent)$ mod $T$\;
 		$LB_v \gets TT_v + h(v, C[next...m], d)$\;
 		\If{v was not removed in a position $\ge$ $\left\vert{R_v}\right\vert$}{
 			\eIf{v is not in Q}{
 			 	 Enqueue $(v, AT_v, TT_v, LB_v, R_v)$ in $Q$\;
 			}{
 				length $\gets \left\vert{R_{v_{in Q}}}\right\vert$\;
 				lb $\gets LB_{v_{in Q}}$\;
 				\If{length $\leq \left\vert{R_v}\right\vert$ and lb $ \ge LB_v$ }{
 					Update $(v, AT_v, TT_v, LB_v, R_v)$ in $Q$\;
 				}
 			}
 		}
 	}
 }
\end{algorithm}

Next, the algorithm begins the network expansion and inserts $s$ in a priority queue $Q$ (line 7) which stores the set of candidates for expansion in the next step. An entry in $Q$ is a tuple $\langle v_i, AT{v_i}, TT{v_i}, LB{v_i}, R{v_i} \rangle$, where $AT{v_i}$ is the arrival time at $v_i$, $TT{v_i}$ is the travel time from $s$ to $v_i$, $LB{v_i}$ is given by $TT{v_i} + h(v_i, C[i...m])$, calculated according to Equation~\eqref{eq:heuristic} and $R{v_i}$ stores the ordered list of POIs that have already been reached in the path from $s$ to $v_i$. The priority of elements in $Q$ is given by increasing order of $LB{v_i}$, thus vertices that offer a greater chance to reach the POIs in $C$ and the destination quickly are checked first.

The vertices are dequeued from $Q$ (line 9) and expanded. When a vertex $u$ is dequeued from $Q$ it is marked as removed (line 10) in the list corresponding to the number of POIs that have already been reached in the path to it. For example, when $s$ is dequeued, it is inserted in the list $Removed[0]$, meaning that no POI, according to the given sequence $C$, was reached in the path to $s$. For every $v$ adjacent to $u$ we check if it is a POI and if it belongs to next category in the sequence. If this condition is satisfied (which is verified in line 18), $v$ is inserted in the next position of $R_v$ and the number of POIs that belong to the sequence found in the path to $v$ is incremented. Furthermore, we add the time the user expects to spend at a POI of the next category, $\tau_{C[next]}$, to the arrival time at the neighbors of $v$ (line 19).

For each $v$, we check whether it is in $Removed[\left\vert{R_v}\right\vert]$ $\ldots$ $Removed[m]$ (line 26). If it is in any of these lists, it is not beneficial for us inserting $v$ in $Q$, since it has been already found in a path that includes more POIs and has a lower $LB_v$. If it is not in any of these queues and neither in $Q$ (which is verified in line 27) it is inserted in $Q$. If it is in $Q$, we check if the new path to it includes more POIs in sequence than in the old one in $Q$ and if $LB_v$ (of the new entry) is less or equal to the one in $Q$ (line 32). If these conditions are satisfied, the old entry of $v$ is removed from $Q$ and the new one is inserted. By doing this, we avoid re-expanding vertices unnecessarily.

The algorithm stops when the next vertex expanded is the destination $d$ and the path from $s$ to $d$, $R_d$, includes all the categories of POIs in $C$ according to the visiting sequence as shown in lines 11 and 12.

\subsubsection{Complexity Analysis}
\label{complexity}

Before analyzing the complexity of our proposed solution, we need to prove the following property regarding the number of times that a vertex is expanded.

\begin{lemma}
\label{lemma:expansion}
A vertex is expanded at most $\left\vert C_q\right\vert + 1$ times in the TD-OSR algorithm.
\end{lemma}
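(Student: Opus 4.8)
The plan is to bound the number of expansions of an arbitrary vertex $v$ by analyzing the role of $\left\vert R_v\right\vert$, the number of in-sequence POIs already collected on the path that leads to $v$ when it is enqueued. Since the query sequence has $m = \left\vert C_q\right\vert$ categories, any path reaching $v$ can have collected between $0$ and $m$ POIs in order, so $\left\vert R_v\right\vert \in \{0, 1, \ldots, m\}$, giving exactly $m + 1 = \left\vert C_q\right\vert + 1$ possible values. The key claim is that $v$ is expanded at most once for each distinct value of $\left\vert R_v\right\vert$; summing over the $m+1$ possible values yields the bound.

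First I would fix a value $k \in \{0, \ldots, m\}$ and argue that $v$ is dequeued (i.e.\ expanded) at most once with $\left\vert R_v\right\vert = k$. The mechanism is the $Removed[\,\cdot\,]$ bookkeeping: line~10 inserts $v$ into $Removed[\left\vert R_v\right\vert]$ the moment it is dequeued with that count. Then I would show that once $v \in Removed[k]$, the guard on line~26 forbids any future insertion of $v$ into $Q$ carrying a POI-count of exactly $k$ (or fewer, depending on how the condition ``removed in a position $\ge \left\vert R_v\right\vert$'' is read). Concretely, the condition checks whether $v$ already appears in $Removed[\left\vert R_v\right\vert], \ldots, Removed[m]$; if a candidate entry for $v$ has count $k$ and $v$ was previously removed at count $k$ or higher, the candidate is rejected and never enqueued. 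Hence no second entry with count $k$ can ever reach the head of $Q$.

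Next I would handle the subtlety that $v$ may sit in $Q$ with one count and later be \emph{updated} (line~33) rather than freshly enqueued. I would observe that an update replaces the existing entry in place, so it does not create an additional dequeue event; and the update condition (line~32) only fires when the new entry has at least as many POIs, so it cannot reintroduce a strictly smaller count that was already retired. Thus every dequeue of $v$ is tied to a unique, not-yet-retired value of $\left\vert R_v\right\vert$, and after that dequeue the value is retired via $Removed$. Combining: over the whole run, $v$ can contribute at most one dequeue per value $k \in \{0,\ldots,m\}$, for a total of at most $m+1 = \left\vert C_q\right\vert + 1$ expansions.

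The main obstacle will be pinning down the exact semantics of the line~26 guard ``$v$ was not removed in a position $\ge \left\vert R_v\right\vert$'' and reconciling it with the update branch so that no loophole permits two dequeues at the same count. The delicate point is that entries with a \emph{larger} count are still allowed to enter $Q$ after a smaller count has been retired (that is precisely why the bound is $\left\vert C_q\right\vert + 1$ rather than $1$), so the argument must show that the guard retires exactly the dequeued count and everything below it, while leaving strictly larger counts eligible. I would make this precise by proving a monotonicity invariant: the counts at which $v$ gets dequeued form a strictly increasing sequence, which immediately caps the number of dequeues at the number of distinct attainable counts, $\left\vert C_q\right\vert + 1$.
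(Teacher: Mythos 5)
Your proposal is correct and follows essentially the same argument as the paper: both proofs rest on the observation that the POI count $\left\vert R_v\right\vert$ takes at most $\left\vert C_q\right\vert + 1$ distinct values and that the $Removed$ bookkeeping (the line-26 guard) prevents a vertex from being expanded twice with the same count, the paper phrasing this as a pigeonhole contradiction while you argue it directly. Your additional care about the update-in-place branch and the strictly increasing sequence of dequeue counts makes the argument tighter than the paper's own, but it is the same underlying idea.
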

\begin{proof}
Let us assume a vertex $v$ is expanded more than $C_q + 1$ times, i.e., $v$ was inserted in $Q$ at least $C_q + 2$ times. As a path from $s$ to $v$ may include from 0 to $\left\vert C_q\right\vert$ POIs that follows the given sequence,  $v$ has been found through at least two different paths that include the same number of $k$ POIs and the corresponding entries were inserted in $Q$.  However, this cannot happen since $v$ is only inserted in $Q$ if it has not already been found in another path that includes at least $k$ POIs, therefore the assumption is false and the lemma is proven.
\end{proof}

Every time the main loop executes, one vertex is extracted from the queue. Since there are $V$ vertices in the graph and each vertex in inserted in $Q$ at most $C_q + 1$ times, the queue may contain $O(\left\vert C_q\right\vert \left\vert V\right\vert)$ vertices. Each removal operation takes $O(\log\left\vert C_q\right\vert \left\vert V\right\vert)$ time. Thus, the total time required to execute the main loop is $O(\left\vert C_q\right\vert \left\vert V\right\vert \log \left\vert C_q\right\vert$$\left\vert V\right\vert)$. Since each vertex is expanded at most $C_q + 1$ times, at most $O(\left\vert C_q\right\vert \left\vert E\right\vert)$ edges are traversed during the expansion of vertices. For each edge, its ending vertex can be inserted or updated in $Q$, which takes $O(\log\left\vert C_q\right\vert \left\vert V\right\vert)$ time. Therefore, the total run time is $O(\left\vert C_q\right\vert \left\vert V\right\vert \log \left\vert C_q\right\vert$$\left\vert V\right\vert + \left\vert C_q\right\vert \left\vert E\right\vert \log \left\vert C_q\right\vert$$\left\vert V\right\vert) = O(\left\vert C_q\right\vert \left\vert E\right\vert \log \left\vert C_q\right\vert$$\left\vert V\right\vert)$.

\section{Experiments}
\label{sec:experiments}

We performed experiments using both synthetic datasets and a real dataset from the Attica region in Greece. 
We compared TD-OSR, our proposed solution, to the TD-PNE solution according to the (logarithmic) number of expanded vertices in the search and the (logarithmic) processing time of the query.  

\subsection{Synthetic network}
We generated a synthetic time-dependent road network as a regular grid where each grid point corresponds to a vertex in the $TDG$. The POIs were uniformly distributed over the network and each category has, in average, the same number of POIs. To generate the edges cost, for each edge, we chose a random speed between 30 km/h and 80 km/h for each 60 minutes interval of the day, our chosen time granularity, so that the FIFO property is observed within the network.


\begin{table}[htb]
\centering
\begin{tabular}{ c || c }
	{\bfseries Network Size ($\times$1,000)} & 25, {\bfseries 50}, 100 \\ \hline
     {\bfseries Density of POIs} & 0.5\%, {\bfseries 1\%}, 2\%  \\ \hline
     {\bfseries Vertex degree} & 2, {\bfseries 2.5}, 3 \\ \hline
     {\bfseries Number of COIs} & 5, {\bfseries 10}, 20 \\ \hline
     {\bfseries Sequence Size} & 1, {\bfseries 3}, 10 \\ \hline
     {\bfseries Query Locality} & 5\%, {\bfseries 15\%}, 50\% \\
\end{tabular}
\caption{Experimental parameters and their values ({\bf bold} defines default values).}
\label{table:parameters}
\end{table}

We evaluated how each of the solution perform with respect to the parameters shown in Table \ref{table:parameters}. Regarding the network, we varied its size (i.e., number of nodes), the density of POIs (i.e., the number of POIs over the number of vertices), the vertices degree and the number of COIs in the network. Furthermore, we investigated how the algorithms behave in relation to the size of the COI sequence given as input and the distance between the query and the destination vertices, which we refer to as ``Query Locality''. This distance is a percentage of the network diameter and serves to show how harder the problem becomes as this distance increases. For each experiment, we varied a parameter and set the others parameters to their default values. For each combination of different parameters, we generated one TDG and executed 10 randomly generated queries.

\textbf{Effect of the network size.} 
\begin{figure}[htb]
\centering
\begin{tabular}{ c }
	\includegraphics[width=0.3\textwidth]{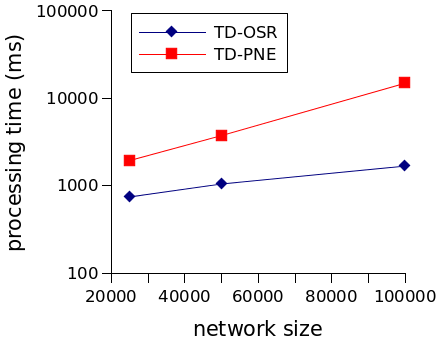} \\  \includegraphics[width=0.3\textwidth]{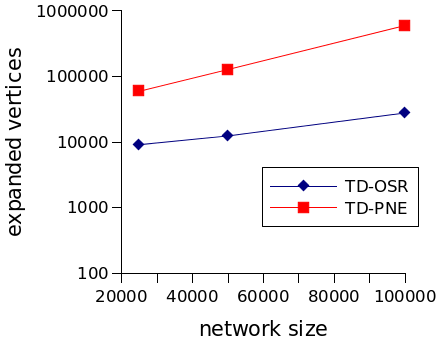}
\end{tabular}
\caption{Processing time of the queries and number of expanded vertices when the network size increases.}
\label{fig:size}
\end{figure}
Figure \ref{fig:size} shows how the solutions behave according to the network size. Both the number of vertices expanded and the processing time increase with the network size in all solutions. This can be explained by the existence of a greater number of paths that can lead to promising POIs in a larger network. It is important to notice that the TD-PNE solution is more affected by this variable than the TD-OSR solution. This happens because the TD-NN searches performed within TD-PNE become more costly with an increase in the number of vertices.
The TD-PNE solution also investigates all the possible routes which have a cost within the cost of the shortest route found so far, not accounting for (an estimate of) the cost of reaching the destination, unlike the TD-OSR algorithm.  
This ultimately leads to investigating routes that are not very promising.


%

\textbf{Effect of the density of POIs.} 
The processing time of the queries and the number of expanded vertices decrease with the density of POIs for the TD-OSR solution whereas it increases slightly for the TD-PNE solution, as shown in Figure \ref{fig:pois}.
When the number of POIs in the network increases, it is easier to find a POI, and thus a route, simply because there are more choices of POIs. This explains the behaviour of the TD-OSR solution. On the other hand, in the TD-PNE solution, even if the TD-NN searches are faster when the POIs become denser, the number of candidate routes to be investigated is greater. With an increasing number of POIs, there is a greater chance of finding partial routes that have a small cost and these routes will be investigated as they offer (misleadingly) a chance to be the optimal route.


\begin{figure}[htb]
\centering
\begin{tabular}{ c }
	\includegraphics[width=0.3\textwidth]{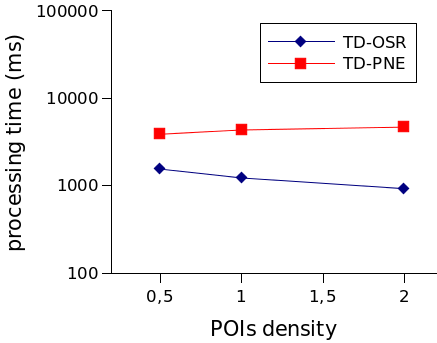} \\ \includegraphics[width=0.3\textwidth]{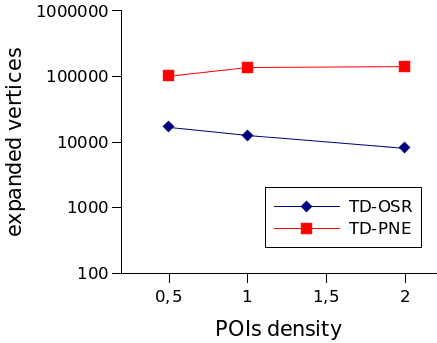}
\end{tabular}
\caption{Processing time of the queries and number of expanded vertices when the POI density increases.}
\label{fig:pois}
\end{figure}

%


\begin{figure}[t!]
\centering
\begin{tabular}{ c c c }
	\includegraphics[width=0.3\textwidth]{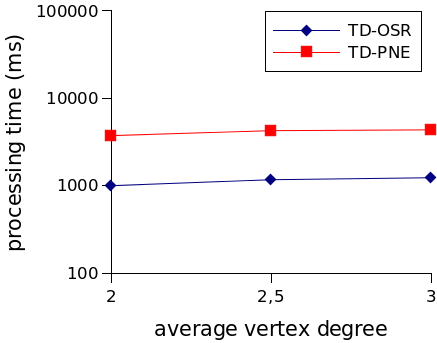} \\ \includegraphics[width=0.3\textwidth]{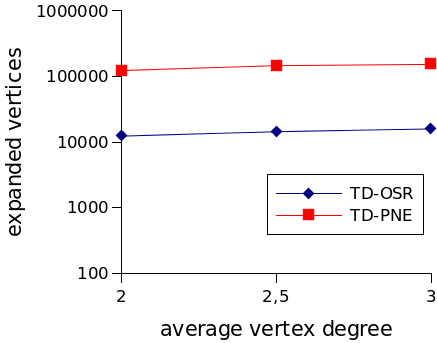}
\end{tabular}
\caption{Processing time of the queries and number of expanded vertices when the degree of the vertices increases.}
\label{fig:degree}
\end{figure}

\textbf{Effect of the degree of the vertices.} As expected and shown in Figure \ref{fig:degree} the processing time of the queries and the number of expanded vertices increase with the degree of the vertices in all solutions. This is reasonable since the number of paths in a network where the vertices have a higher degree is greater, making it more difficult to find POIs and consequently a route.



\begin{figure}[b!]
	\centering
	\begin{tabular}{ c c c }
		\includegraphics[width=0.3\textwidth]{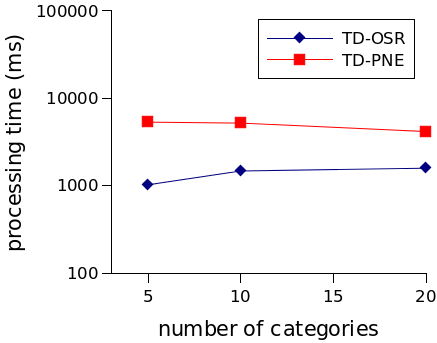} \\ \includegraphics[width=0.3\textwidth]{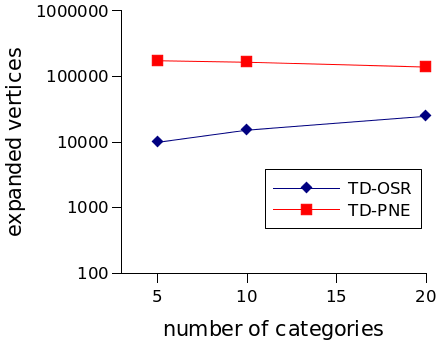}
	\end{tabular}
	\caption{Processing time of the queries and number of expanded vertices when the number of COIs in the network increases.}
	\label{fig:cat}
\end{figure}
\textbf{Effect of the number of COIs in the network.} 
Figure \ref{fig:cat} shows that the processing time of the queries and the number of expanded vertices increase with the number of COIs in the network in the TD-OSR solution while it decreases in the TD-PNE solution. Since the POIs are uniformly distributed among the categories, increasing the number of categories, decreases the number of POIs per COI. Thus, it is harder to find a POI that belongs to a certain category. This explains the behavior of the TD-OSR solution. On the other hand, the TD-PNE solution takes advantage of this because it limits the number of candidate routes to be investigated.



\textbf{Effect of the sequence size.} 
As expected and shown in Figure \ref{fig:seq}, the processing time of the queries and the number of expanded vertices increase with the size of the COI sequence in both solutions. Generally speaking, this is because more vertices have to be checked to find a sequence with more POIs.
Although this variable does not influence the cost of a single TD-NN search, more of those searches need to be performed in the TD-PNE solution when the sequence size increases. Furthermore, in the TD-PNE solution there will be more candidates routes to be investigated. We can also notice that this variable significantly affects the TD-OSR solution.
The greater the number of COIs, the higher the effort of this solution since there are more possible combinations of POIs to form a route that follows the sequence and more of those candidate routes will be investigated.


\begin{figure}[htb]
\centering
\begin{tabular}{ c c c }
	\includegraphics[width=0.3\textwidth]{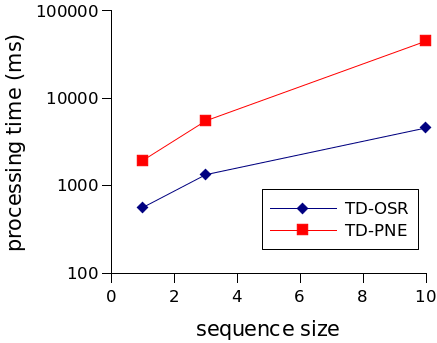} \\ \includegraphics[width=0.3\textwidth]{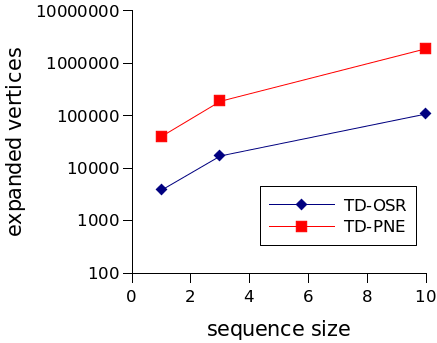}
\end{tabular}
\caption{Processing time of the queries and number of expanded vertices when the sequence increases.}
\label{fig:seq}
\end{figure}

%

\textbf{Effect of the locality of the query.} Figure \ref{fig:loc} shows that the cost of all solutions increase with the distance between the origin and the destination. The larger the distance between the origin and the destination, the greater the number of candidate shortest paths between those two locations. Particularly, more routes will be investigated in the TD-PNE solution because the upper bound found in this solution tends to be high and many partial routes with cost within this upper bound may be found. This involves a larger number of TD-NN searches, which explains why this solution is the most affected by this variable.


\begin{figure}[t!]
\centering
\begin{tabular}{ c c c }
	\includegraphics[width=0.3\textwidth]{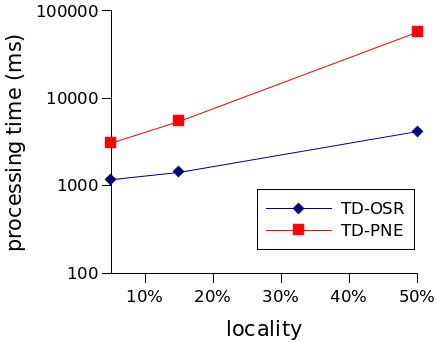} \\ \includegraphics[width=0.3\textwidth]{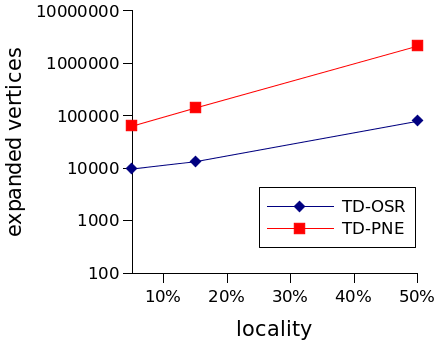}
\end{tabular}
\caption{Processing time of the queries and number of expanded vertices when the distance between the origin and the destination increases.}
\label{fig:loc}
\end{figure}


\subsection{Real network}
The road network of Attica was extracted from OSM\footnote{http://www.openstreetmap.org}. It has 161066 vertices, 204854 edges and a diameter of about 70 km. We chose 20 COIs, that represent services, from the existing ones in the real dataset, making
a total of 1348 POIs, i.e,. an average of 67 POIs per COI. We divided those POIs in ``rare'', ``common'' and ``frequent'' POIs, each such group having 6-7 POIs as a sample of the actual POIs according to their frequency. The most and less frequent POI (gas stations and night clubs, respectively) made for 13\% and 
represented less than 1\% of all POIs, respectively.  
An example of a ``common'' POI is convenience stores, representing 5\% of all used POIs.

To generate the edges costs for the different times of the day, we applied the following procedure. Each edge of the network belongs to a class based on maximum allowable speed. We created one normal distribution for each edge class that returns the speed with which an object is moving on that edge. As the mean for each distribution we used the maximum speed of the class plus the minimum of all maximum speeds of all edge classes divided by 2. The chosen standard deviation was 1/4 of the mean. For each edge, we got a speed value from the corresponding distribution and we checked if it was less than the maximum allowable speed for that edge class. If so, the corresponding travel time is found using the edge length. This time was considered as a base for travels between 12am and 7am and for the time ranges of 8am-9am, 10am-3pm, 4pm-6pm, 7pm-10pm and 11pm-12am) this cost was multiplied by 1.7, 1.4, 1.9, 1.3 and 1.1, respectively, to reflect periods more or less prone to traffic.

Given that this is a real dataset not all parameters can be changed as we did before. In the following we discuss how the algorithms behave in relation to the query locality and the size of the COI sequence. Moreover, we also investigate the effect of the frequency of COIs. 

\begin{figure}[htb!]
\centering
\begin{tabular}{ c c c }
	\includegraphics[width=0.3\textwidth]{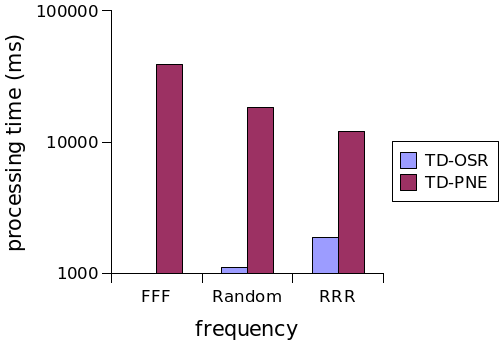} \\ \includegraphics[width=0.3\textwidth]{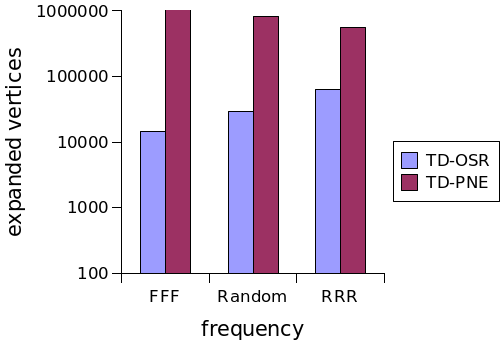}
\end{tabular}
\caption{Processing time of the queries and number of expanded vertices for different sized COIs.}
\label{fig:freq}
\end{figure}

\begin{figure}[htb!]
\centering
\begin{tabular}{ c c c }
	\includegraphics[width=0.3\textwidth]{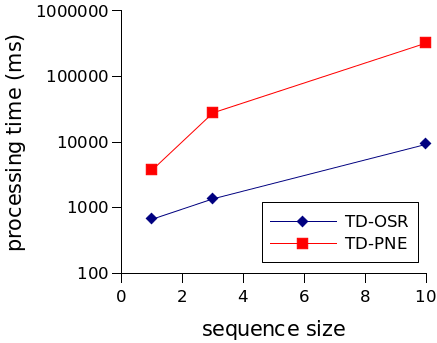} \\ \includegraphics[width=0.3\textwidth]{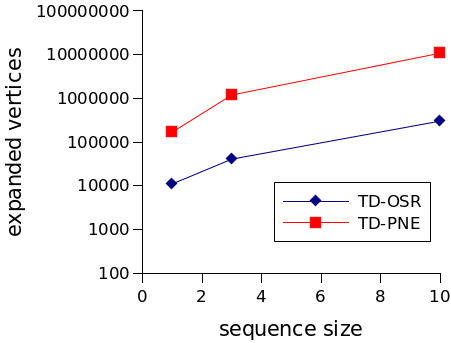}
\end{tabular}
\caption{Processing time of the queries and number of expanded vertices on the real network when the sequence size given as input increases.}
\label{fig:seqReal}
\end{figure}

\textbf{Effect of POI frequency.}
 We generated queries with COI sequences of size 3, but varying the frequencies of those COIs. We investigated the extreme cases of sequences containing only frequent COIs (denoted as ``FFF'') and only rare COIs (denoted as ``RRR'') as well an intermediary case with randomly chosen COIs (denoted as ``Random'').
Figure \ref{fig:freq} shows the results of this experiment. 
Both variables increase in the TD-OSR solution while they decrease in the TD-PNE solution as the POIs query sequence become rare. It takes more effort to find rarer POIs and this explains the behavior of the TD-OSR solution. Nevertheless, in the TD-PNE solution, there are less candidate routes to be investigated as the POIs become rarer, thus the improvement in its performance.  Nonetheless TD-PNE is overall much less efficient than TD-OSR.

\textbf{Effect of the sequence size.} Figure \ref{fig:seqReal} shows that the solutions exhibit the same behavior as in the synthetic experiments performed in relation to this variable. There is a greater number of candidate routes to be investigated in both solutions when the sequence size increases, which explains the decrease in performance as the queried sequence gets larger.



\textbf{Effect of the query locality.} 
Figure \ref{fig:locReal} shows that, as in the synthetic experiments, 
the larger the distance between the origin and the destination, the greater the number of candidate shortest paths between these two locations which increases the number of candidate routes to be investigated in both solutions.

\begin{figure}[htb!]
	\centering
	\begin{tabular}{ c c c }
		\includegraphics[width=0.3\textwidth]{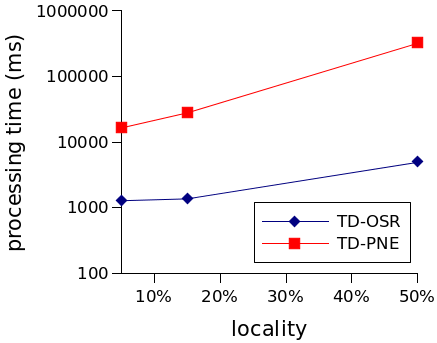} \\ \includegraphics[width=0.3\textwidth]{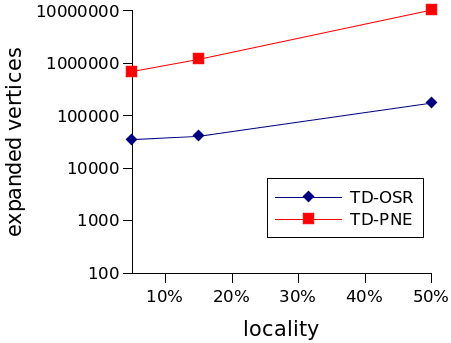}
	\end{tabular}
	\caption{Processing time of the queries and number of expanded vertices on the real network when the query locality increases.}
	\label{fig:locReal}
\end{figure}



\section{Conclusion}
\label{sec:conclusion}
We have addresed an important extension to the known Optimal Sequence Routing (OSR) query, by considering the underlying network to be time-dependent, i.e., the cost (travel time) to traverse an edge depends on the time it is initiated.  This extension, which we call Optimal Time-dependent Sequence Routing (OTDSR) query is a more realistic assumption with respect to, for instance, urban road networks. This formulation of the problem allows one to easily map a traditional OSR query into an OTDSR query, thus making our proposal generic in the sense that it can also address OSR queries.  Finally, our solution is based on the A* paradigm with an admissible heuristic function that guarantees optimality of result.  

In order to evaluate the merits of our proposal, the TD-OSR algorithm, we compared it to a time-aware extension of an approach previously designed for the (non-time dependent) OSR, which we named TD-PNE. 
Varying a number of parameters, and using both synthetic and real datasets the TD-OSR was always significantly faster than the TD-PNE.

As future work we would like to investigate a probabilistic version of the OTDSR problem, i.e., one where the edge costs are not necessarily time-dependent but follow a probabilistic distribution.

\section*{Acknowledgement}

This research has been partially supported by CNPQ grants 306806/2012-6,  Casadinho 552578/2011-8 and 454727/2014-3, European Union project SEEK no 295179; the European Union (European Social Fund - ESF) and Greek national funds through the Operational Program
"Education and Lifelong Learning" of the National Strategic Reference Framework (NSRF) - Research Funding Program: Thales, and  NSERC Canada. 

\bibliographystyle{acm}
\bibliography{sigproc}
\end{document}